\newcommand{\frechet}{Fr\'echet}
\newcommand{\dfre}{d_F}
\newcommand{\dhaus}{d_H}
\newcommand{\dist}{\mathit{d}}
\newcommand{\BO}{\mathcal{O}}
\newcommand{\dfd}{discrete \frechet\ distance}
\newcommand{\npc}{\textbf{NP}-complete}
\newcommand{\scm}{set-chain}
\newcommand{\SCM}{Set-Chain}
\newcommand{\true}{\textbf{true}}
\newcommand{\false}{\textbf{false}}
\newtheorem{definition}{Definition}
\newtheorem{theorem}{Theorem}
\newtheorem{corollary}{Corollary}
\begin{document}

\title{Intermittent Map Matching with the Discrete \frechet\ Distance}


\author{Tim Wylie\thanks{The University of Texas-Rio Grande Valley, Edinburg, TX 78539  {\tt timothy.wylie@utrgv.edu}} \and Binhai~Zhu\thanks{Montana State University, Bozeman, MT 59717-3880  {\tt bhz@cs.montana.edu}}}


\date{}
\maketitle
\begin{abstract}
In this paper we focus on the map matching problem  
where the goal is to find a path through a planar graph such that the path through
the vertices closely matches a given polygonal curve.
The map matching problem is usually approached with the \frechet\ distance matching
the edges of the path as well.  Here, we formally define the discrete map matching problem 
based on the \dfd.  We then look at the complexities of
some variations of the problem which allow for vertices in the graph to be unique or
reused, and whether there is a bound on the length of the path or the number
of vertices from the graph used in the path.
We prove several of these problems to be \npc, and then conclude the paper with some open
questions.
\end{abstract}
\section{Introduction}

The map matching problem arose naturally out of sophisticated 
GIS (Geographic Information System) software and algorithms.  With the
rise of excellent satellite data and the common use of GPS (Global Positioning System) 
in cellphones and vehicles for navigation, an important task is map routing.
There are several problems dealing with map routing, but our focus will be on the 
reconstruction tasks.  Assume that we know the road network and can treat it
as a planar embedded graph, and that we also have the GPS data from a car
traveling.  Often, this data is not accurate and may be noisy or approximate to
the exact location of the vehicle.  Thus, the data may not align with the roads
properly.

The standard map matching problem is trying to find the most likely route of the vehicle on 
the road network given a planar metric graph to represent the road network and a polygonal curve 
representing the GPS data.
We extend this analogy for modern instances where the data may be intermittent due to 
coverage or power constraints.  A common issue with cellphones or hand-held GPS devices is 
limited battery life.  A feasible situation is one where a user may only power on the 
device when near a city or when they need something immediately (such as making a call 
for directions), and afterwards they turn the device back off.  
Similarly, in remote areas, a user may only have reception near certain towns.  
Thus, tracking their phone is only possible when a signal is available.
In these instances our data has discrete points of the lossy data.
We still have a polygonal curve,
but we can not depend on all edges of the line to be accurate location data.
We know the node ordering, but the edges may not represent the actual path taken.
So the problem is to find the most probable \emph{simple} path of a vehicle 
between the recorded points on the road network.

With respect to map matching, the problem of finding a path in a graph given a polygonal 
curve with respect to the \frechet\ distance was first posed by Alt et al. \cite{Alt:2003:JALGS}
as follows: Let $G = (V, E)$ be an undirected connected planar graph with a given straight-line
embedding in $\mathbb{R}^2$ and a polygonal line $P$. Find a path $\pi$ in $G$ which minimizes
the \frechet\ distance between $P$ and $\pi$. They give an efficient algorithm which runs in
$\BO(pq \log q)$ time and $\BO(pq)$ space where $p$ is the number of line segments of $P$ and $q$ is the
complexity of $G$. Their version allows for vertices and edges to be visited multiple times,
and is similar to the discrete version we cover in Section \ref{sec:nmmck}.

The recent work by Maheshwari et al. improved the running time for the 
map matching problem for complete graphs \cite{Maheshwari:2011:CCCG}.
The original algorithm would decide it in $\BO(pk^2 \log k)$ where $k$ is the number of vertices
in the graph, and their new algorithm solves it in $\BO(pk^2)$.  
We refer to this as the set-chain matching problem as in \cite{Wylie:2014:TCS}. 

Map matching is an active area of research with many approaches.  The two main 
methodologies are those based on geometric methods and ones based on Global Weight Optimization.
However, the methodologies can also be classified based on the problem definition where
we have local/incremental methods, global methods, and statistical methods.
These can all be extended to include topological and geological conditions, 
current weather and traffic conditions, speed limits, and other variables that can 
produce more optimal routes \cite{Lou:2009:GIS,Wei:2013:TECH}. 
However, the work presented here is fully based on global geometric methods. 
We assume that we have all of the data and want to use a geometric method (the \dfd ) 
to find the best fit for the input.

There has been recent work which allowed for better performance with certain types
of curves, with dual simplification for an approximate result, with bounded simplification
of one of the chains, and in graphs with certain properties, \cite{Brakatsoulas:2005:VLDB,Buchin:2009:SODA,Chen:2011:ALENEX,Driemel:2012:DCG}.
With map matching, for the weak \frechet\ distance, the bounds have been lowered further to
$\BO(pq)$ \cite{Chen:2008:UP}, and the problems can be better defined with a smaller error bound
\cite{Wenk:2006:SSDBM}.

In reality, all GPS data is discrete, and these approaches smooth the data.  There are some
methods optimized for low-sampling-rate data \cite{Lou:2009:GIS}, but even these assume
some maximum time between samples (less than five minutes).  Our purpose is to analyze 
data where samples may be hours apart and can not be reasonably `smoothed'. 

This paper is organized as follows. In Section \ref{sec:preliminaries} we cover some necessary
preliminary concepts. Section \ref{sec:dismmatch} formally defines the problem and scope of
our investigation. Then Sections \ref{sec:nmmck}, \ref{sec:nmmsk}, and \ref{sec:ummk} cover the
complexities of the three variations we are considering.  
Finally, we conclude the paper in \ref{sec:conclusion} with some future research directions and open questions.


\section{Preliminaries} \label{sec:preliminaries}

The discrete \frechet\ distance was originally defined by Eiter and Mannila \cite{Eiter:1994:TECH} 
in 1994, and was further expanded on theoretically by Mosig et al. in 2005 \cite{Mosig:2005:CGTA}.

Given two polygonal curves, we define the discrete \frechet\ distance as follows.
We use $\dist(a,b)$ to represent the
euclidean distance between two points $a$ and $b$, but it could be
replaced with other distance measures depending on the application. 

\begin{definition} \label{def:dfdformal}
    The discrete Fr\'echet distance, $\dfre$, between two polygonal curves $f:[0,m] \rightarrow \mathbb{R}^k$ and $g:[0,n] \rightarrow \mathbb{R}^k$ is defined as:
    \[
    \dfre(f,g) = \min_{\sigma:[1:m+n]\rightarrow[0:m], \beta:[1:m+n]\rightarrow[0:n]}\max_{s \in [1:m+n]} \Bigg \{d \Big ( f(\sigma(s)), \, g(\beta(s)) \Big ) \Bigg \}
    \]
    where $\sigma$ and $\beta$ range over all discrete non-decreasing onto mappings of the form $\sigma:[1:m+n]\rightarrow[0:m], \beta:[1:m+n]\rightarrow[0:n]$.
\end{definition}

The continuous \frechet\ distance is typically explained as the relationship between a 
person and a dog connected by a leash walking along the two curves and trying
to keep the leash as short as possible.  However, for the discrete case, we only
consider the nodes of these curves, and thus the man and dog must ``hop'' along
the nodes.
With a dynamic programming solution for finding the discrete \frechet\ distance between 
two polygonal curves, Eiter and Mannila proved:

Since the moves taken along the chains are discrete, finding the best walk between the two chains
is relatively straightforward.
By giving a dynamic programming solution for finding the discrete \frechet\ distance between 
two polygonal curves, in \cite{Eiter:1994:TECH} Eiter and Mannila proved that it could be easily solved in $\BO(mn)$ time.
Recently, Agarwal et al. provided the first subquadratic algorithm for the \dfd\ giving the following theorem.

\begin{theorem} \label{thm:dfdtime}
The discrete \frechet\ distance between two polygonal curves, with $m$ and $n$ vertices 
respectively, can be computed in $\BO(\frac{mn\log\log n}{\log n})$ time \cite{Agarwal:2013:SODA}.
\end{theorem}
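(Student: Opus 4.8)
The plan is to separate the problem into its decision and optimization parts and to attack the decision part with a Four-Russians-style block speedup. Fix a threshold $\delta$ and consider the Boolean matrix $M$ with $M[i,j]=\true$ exactly when $\dist(f(i),g(j))\le\delta$. Deciding whether $\dfre(f,g)\le\delta$ is a monotone reachability question on $M$: call $(i,j)$ reachable if $M[i,j]=\true$ and at least one of $(i-1,j)$, $(i,j-1)$, $(i-1,j-1)$ is reachable (with $(0,0)$ as the seed), and answer \true\ iff $(m,n)$ is reachable. This is precisely the recurrence behind the $\BO(mn)$ dynamic program of Eiter and Mannila, so the entire task is to evaluate this staircase reachability in $o(mn)$ time and then to recover the optimal $\delta$, which must be one of the $\BO(mn)$ pairwise distances $\dist(f(i),g(j))$.

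First I would tile the $m\times n$ grid into blocks of side $t=\Theta(\log n/\log\log n)$, producing $\BO(mn/t^2)$ blocks that I process in row-major order. Each block is entered with the reachability bits along its bottom and left boundaries (total length $\BO(t)$) and must emit the reachability bits along its top and right boundaries. If this boundary-to-boundary transition is resolved by a single table lookup and the $\BO(t)$ boundary bits are read and written per block, the traversal costs $\BO((mn/t^2)\cdot t)=\BO(mn/t)$, which for the chosen $t$ is exactly $\BO\!\left(\frac{mn\log\log n}{\log n}\right)$. The correctness of the block-wise sweep follows from the locality of the recurrence: reachability only ever propagates rightward, upward, and diagonally, so a block's output boundary is a function solely of its own Boolean content together with its incoming bottom and left boundaries.

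The main obstacle, and the technical heart of the argument, is bounding the size of the lookup table so that it can be precomputed in $o(mn)$ time. Naively a block has $2^{\Theta(t^2)}$ possible Boolean contents, which is far too many to tabulate. The decisive structural step is to show that the transitions actually realized have only a subquadratic number of distinct behaviors: reachability along any boundary of a block propagates monotonically and can be described by a staircase (an up-set), so the relevant incoming and outgoing boundary states number only $\BO(\mathrm{poly}(t))$ rather than $2^{\Theta(t)}$, and the feasible patterns inside a block that come from a geometric distance predicate admit a succinct canonical encoding. Establishing this monotone/staircase compression and turning it into a table of size $2^{o(\log n)}$ that is built in $o(mn)$ time is where the real work lies; once it is in place, choosing $t=\Theta(\log n/\log\log n)$ balances table construction against the sweep and yields the stated decision-time bound. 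Finally, I would wrap the decision procedure in a selection-based search over the $\BO(mn)$ candidate distances—rather than a full sort—so that the optimal $\delta$ is recovered while preserving the bound up to the indicated lower-order factors, giving the claimed running time for \dfre\ of Theorem~\ref{thm:dfdtime}.
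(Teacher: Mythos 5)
Your first obstacle is that the paper itself contains no proof of Theorem~\ref{thm:dfdtime}: the statement is imported verbatim from Agarwal, Ben Avraham, Kaplan and Sharir \cite{Agarwal:2013:SODA} and is used purely as a black box (e.g., to verify a candidate path in the \textbf{NP}-membership part of Theorem~\ref{thm:ummck}). So your proposal must be measured against the proof in \cite{Agarwal:2013:SODA}, whose overall architecture it does resemble: a Four-Russians-style block decomposition of the Eiter--Mannila dynamic program for the decision version, followed by a search for the optimal threshold. Measured that way, your write-up is an outline with the hard part missing, and the one concrete structural claim you do make is false.

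The false claim: reachability along a block boundary is \emph{not} a staircase or up-set, and the boundary states do not number $\mathrm{poly}(t)$. A monotone path can never leave a row and re-enter it, so within a row reachability propagates only along contiguous runs of free cells; if the free cells of a row alternate with blocked ones, essentially any subset of the free cells can arise as the reachable set (driven by the row below), so there are $2^{\Theta(t)}$ realizable boundary states. That by itself would be survivable --- with $t$ a small constant times $\log n$, a table indexed by $2^{O(t)}$ boundary patterns is affordable --- but it means your compression targets the wrong object. The real obstruction, which you explicitly defer as ``where the real work lies,'' is the block \emph{content}: a $t\times t$ block holds $t^2$ bits, arbitrary Boolean blocks are information-theoretically incompressible below that, and merely evaluating all those bits over all blocks (one distance comparison each) already costs $\Theta(mn)$, destroying subquadraticity. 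Agarwal et al.\ escape this only through the geometry: a block's content is the incidence pattern of $t$ consecutive points of one curve with the arrangement of $t$ congruent disks centered at consecutive points of the other; that arrangement has $O(t^2)$ faces, so only $2^{O(t\log t)}$ contents are realizable, each admitting an $O(t\log t)$-bit signature (this $\log t$ is precisely why $t$ must be taken as $\Theta(\log n/\log\log n)$, i.e., the source of the $\log\log n$ in the bound), and the same arrangement is shared by an entire row of blocks, so signatures are obtained by point location rather than by reading $t^2$ bits per block. None of this reuse appears in your purely combinatorial formulation. Finally, your optimization step does not deliver the stated bound even if the decision procedure is granted: a selection or binary search over the $\Theta(mn)$ candidate distances requires $\Theta(\log n)$ decision calls, i.e., $\Theta(mn\log\log n)$ total time, a full $\log n$ factor above the theorem's bound; this is exactly why \cite{Agarwal:2013:SODA} must fold the search for the optimum into the decision machinery itself rather than wrap the decision procedure in an outer search, as you propose.
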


\begin{figure}[ht!]
    \begin{center}
        \subfigure[]{\label{fig:frecdiffa}\includegraphics[height=.65in]{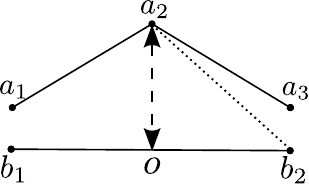}}
        \hspace*{.5cm}
        \subfigure[]{\label{fig:frecdiffb}\includegraphics[height=.65in]{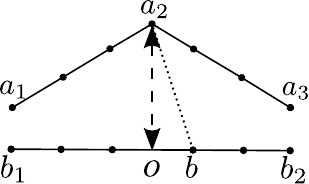}}
        \hspace*{.5cm}
        \subfigure[]{\label{fig:frec_haus_diff}\includegraphics[height=.6in]{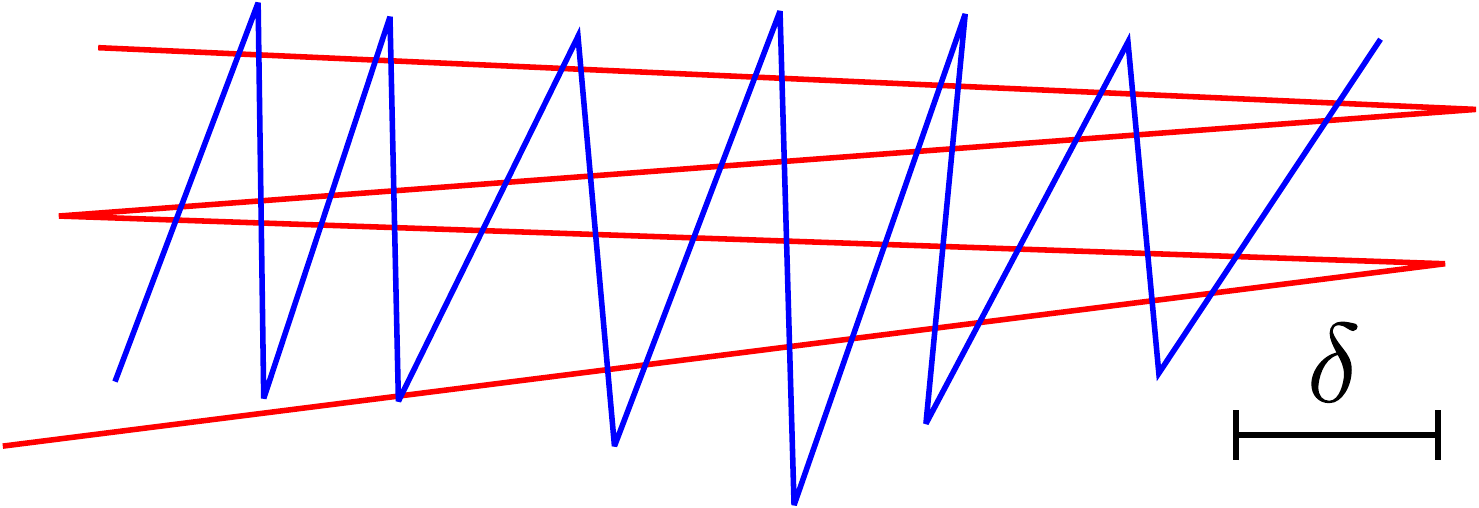}} \\

    \end{center}
    \caption{Figures (a) and (b) show the relationship between the discrete and continuous \frechet\ distance where $o$ is the 
    continuous and the dotted line represents the discrete distance. (a) the curves
    have fewer nodes and a larger discrete \frechet\ distance, while (b) is the same paths with
    more nodes, and thus provides a better approximation of the \frechet\ distance.
    (c) Two polygonal curves that highlight the difference between the \frechet\ distance and the 
        Hausdorff distance. Here $\dhaus \leq \delta$, but $\dfre > 2\delta$.}
    \label{fig:frec_diff}
\end{figure}

Figure \ref{fig:frec_diff} shows the relationship between the discrete and continuous
\frechet\ distances. In Figure \ref{fig:frecdiffa}, we have two polygonal curves (or chains) 
$\langle a_1, a_2, a_3 \rangle$ and $\langle b_1, b_2 \rangle$, 
the continuous \frechet\ distance between the two is the distance
from $a_2$ to segment $\overline{b_1b_2}$, i.e., $\dist(a_2,o)$. The discrete \frechet\
distance is $\dist(a_2,b_2)$. The discrete \frechet\ distance could be
quite larger than the continuous distance.
However, with enough evenly sample points on the two curves, the resulting
discrete \frechet\ distance, i.e., $\dist(a_2,b)$ in Figure \ref{fig:frecdiffb}, 
closely approximates $\dist(a_2,o)$.

The Hausdorff distance was first defined by Felix Hausdorff in 1914 \cite{Hausdorff:1914:BOOK}.
Since its introduction, the Hausdorff distance has become one of the most widely used 
similarity measures across many disciplines.

\begin{definition} \label{def:hausdorff}
    Let $X$ and $Y$ be two non-empty subsets of a metric space $(M,d)$ where $M$ is the space and $d$ the distance measure. 
    We define their Hausdorff distance $\dhaus(X, Y)$ by
    \[
     \dhaus(X,Y) = \max \{\sup_{x \in X} \inf_{y \in Y} d(x,y), \sup_{y \in Y} \inf_{x \in X} \dist(x,y) \},
    \]
    where $\sup$ represents the supremum and $\inf$ the infimum.
\end{definition}

Figure \ref{fig:frec_haus_diff}
shows a classic example used to demonstrate this difference.  There are two polygonal 
curves which intersect repeatedly.  Due to this crossing the Hausdorff distance is
less than or equal to $\delta$, however, because the curves zigzag in different directions
the \frechet\ distance is greater than $2\delta$ \cite{Alt:1995:JCOMPS}.  


Recently, we worked on some similar variations for the discrete set-chain matching problem.  The difference
is that the objective of set-chain matching is to find a path through a set of points rather than a graph.  
We reduce from one of these problems so we briefly give the definition and the results.

\begin{definition}[The Discrete \SCM\ Matching Problem]\hfill \\
    \noindent
    {\bf Instance:}
    Given a point set $S$, a polygonal curve $P$ in $\mathbb{R}^d$ $(d \geq 2)$, an integer $K \in \mathbb{Z}^+$,  
    and an $\varepsilon > 0$. \\
    {\bf Problem:}
    Does there exist a polygonal curve $Q$ with vertices chosen from $S'$ 
    where $S' \subseteq S$, 
    such that $T \leq K$ and $\dfre(P,Q) \leq \varepsilon$?
\end{definition}

$T$ is defined in two ways.  When limiting the number of nodes in the curve, $T=|Q|$, and 
if restricting the number of points used then $T=|S'|$.
They vary whether 
there is a uniqueness constraint on $s \in S$ being used as a node in $Q$ (if points may be used more than once),
and whether our goal is to limit the size of the curve $Q$ or the set $S'$.  We distinguish the problems as
Unique/Non-unique(U/N) \SCM(S) Matching(M) with a $K$ Subset/Curve(S/C).  
The variants are thus NSMS, NSMC, and USM.

\begin{theorem} \label{thm:nsmck}
    The discrete non-unique \scm\ matching problem where $T=|Q|$ is polynomial, i.e., NSMC $\in$ \textbf{P} \cite{Wylie:2014:TCS}.
\end{theorem}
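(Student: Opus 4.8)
The plan is to reduce NSMC to a simple one-dimensional interval-covering problem that is solvable in polynomial time. The structural fact I would exploit is that, since vertices of $S$ may be reused, the only quantity that matters is how few points of $S$ suffice to ``cover'' the chain $P = \langle p_1, \dots, p_m \rangle$ in order. Concretely, I would call an index interval $[a,b]$ \emph{coverable} if there is a single $q \in S$ with $\dist(q, p_\ell) \le \varepsilon$ for every $\ell \in [a,b]$; such a $q$ can then be matched to all of $p_a, \dots, p_b$ in a discrete \frechet\ coupling without exceeding $\varepsilon$.

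The first step is to prove the equivalence: there exists a curve $Q$ with vertices from $S$, $|Q| \le k$, and $\dfre(P,Q) \le \varepsilon$ if and only if $\{1, \dots, m\}$ can be split into at most $k$ consecutive coverable intervals. For the forward direction I would take a coupling realizing $\dfre(P,Q) \le \varepsilon$ and invoke the monotone, onto nature of the coupling: the set of $p_i$ matched to a fixed $q_j$ is always a contiguous block, consecutive blocks overlap in at most one vertex, and the blocks exhaust $\{1,\dots,m\}$; trimming the single-vertex overlaps yields a partition into at most $|Q|$ coverable intervals, using that any subinterval of a coverable interval is coverable. For the converse, given a partition into $t \le k$ coverable intervals I pick one witness point of $S$ per interval and list them in order as $Q$; matching every $p_\ell$ to the witness of its interval certifies $\dfre(P,Q) \le \varepsilon$, and since extra vertices never hurt (one may stutter a point), every $|Q| \ge t$ is realizable. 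Hence the minimum achievable $|Q|$ equals the minimum number of coverable intervals, and NSMC answers \true\ exactly when this minimum is at most $K$.

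The second step is the algorithm. I would first compute, for each $q \in S$, the set of $p_i$ within distance $\varepsilon$ of $q$ in $\BO(|S|\,m)$ time, from which coverability of any interval is easy to decide. I would then run a dynamic program $C(i) = 1 + \min\{ C(j) : [j+1, i]\ \text{coverable}\}$ with $C(0)=0$, and compare $C(m)$ with $K$. Because coverability is monotone (for each left endpoint $a$ there is a furthest reachable $b$), a greedy left-to-right sweep that extends the current interval as far as possible is an even simpler optimal alternative, justified by a standard exchange argument; either way the running time is polynomial in $m$ and $|S|$.

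The main obstacle, and the step I would handle most carefully, is the forward direction of the equivalence: I must argue rigorously that the monotone onto coupling forces each $q_j$'s matched preimage in $P$ to be a single contiguous interval with only boundary overlaps, so that no gaps arise and the interval count never exceeds $|Q|$. Everything else — the coverability precomputation, the dynamic program (or greedy sweep), and the final comparison with $K$ — is routine once this correspondence is in place.
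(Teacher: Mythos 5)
Your proof is correct, but it takes a genuinely different route from the source of this result. Theorem~\ref{thm:nsmck} is imported from \cite{Wylie:2014:TCS} and not reproved in this paper; the approach there (mirrored in Section~\ref{sec:nmmck}, Equation~\ref{eq:nmmck}, for the graph version NMMC) is a dynamic program indexed by pairs consisting of a candidate point of $S$ and a position in $P$, tracking the minimum length of a matching curve that ends at that point. You instead prove a structural characterization: any monotone onto coupling decomposes $\{1,\dots,m\}$ into at most $|Q|$ contiguous, ordered blocks (each block being the preimage of one $q_j$, consecutive blocks overlapping in at most one index), so after trimming overlaps the existence of $Q$ with $|Q|\leq K$ and $\dfre(P,Q)\leq\varepsilon$ is equivalent to partitioning $P$ into at most $K$ consecutive ``coverable'' intervals; the converse direction, building a coupling that matches each interval entirely to its witness with a diagonal step at each boundary, is also valid. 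This reduces NSMC to a one-dimensional interval-partition problem solvable greedily or by a simple DP in $\BO(|S|\,m)$ time after the distance precomputation. What each approach buys: your characterization exploits the defining feature of the \scm\ setting---there are no adjacency constraints, so any point of $S$ may follow any other in $Q$---yielding a more elementary argument and a leaner algorithm; the paper's pointwise DP is heavier for this special case but carries over directly to NMMC, where consecutive vertices of $Q$ must be adjacent in $G$. In that graph setting your interval-partition equivalence breaks down (the chosen witnesses must themselves form a path, so the minimum number of intervals no longer equals the minimum $|Q|$, and the greedy sweep is no longer exchange-safe), which is precisely why the paper's formulation is the one that generalizes.
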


\begin{theorem} \label{thm:nsmsk}
    The discrete non-unique \scm\ matching (NSMS) problem where $T=|S'|$ is \npc\ \cite{Wylie:2014:TCS}.
\end{theorem}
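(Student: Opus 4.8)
The plan is to first collapse the problem to a purely geometric covering question, then dispatch membership in \textbf{NP} and hardness against that clean reformulation. The structural observation that drives everything is that, because the matching is \emph{non-unique} (a point of $S$ may serve as a vertex of $Q$ arbitrarily often) and the quantity we bound is $T=|S'|$, the ordering along $P$ carries no information. Concretely, I would prove that a feasible pair $(S',Q)$ with $\dfre(P,Q)\le\varepsilon$ and $|S'|\le K$ exists if and only if $S'$ is an $\varepsilon$-\emph{cover} of the vertex set of $P$, meaning every vertex $p_i$ of $P$ has some $s\in S'$ with $\dist(p_i,s)\le\varepsilon$. For the forward direction, fix a coupling realizing $\dfre(P,Q)\le\varepsilon$; since the coupling is onto, each $p_i$ is paired with at least one $q_j\in S'$, and that pair certifies $\dist(p_i,q_j)\le\varepsilon$. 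For the converse, given such a cover, pick for each $p_i$ a witness $s_i\in S'$ within $\varepsilon$ and set $Q=\langle s_1,\dots,s_m\rangle$; the diagonal coupling $(p_i,s_i)$ shows $\dfre(P,Q)\le\varepsilon$, and $Q$ uses only points of $S'$. Hence NSMS is exactly discrete geometric set cover: choose at most $K$ centers from $S$ whose $\varepsilon$-balls cover all vertices of $P$.

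Membership in \textbf{NP} then follows immediately. A certificate is the subset $S'$ together with the mirrored curve $Q=\langle s_1,\dots,s_m\rangle$, which has at most $m$ vertices and is therefore polynomial in the input; one checks $|S'|\le K$ and verifies $\dfre(P,Q)\le\varepsilon$ in $\BO(m^2)$ time with the Eiter--Mannila dynamic program (or the sharper bound of Theorem~\ref{thm:dfdtime}).

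For hardness I would reduce from a covering problem that is already NP-complete. The cleanest route is to reduce from the discrete unit-disk-cover problem (cover a given planar point set with the fewest balls of a fixed radius centered at points of a candidate set): take the points to be covered as the vertices of $P$, the candidate centers as $S$, the radius as $\varepsilon$, and keep the budget $K$. By the equivalence above the two instances are yes-equivalent, so NP-hardness is immediate. If instead one wants a reduction from a purely combinatorial source such as \emph{Dominating Set} or \emph{Set Cover}, the construction attaches a vertex of $P$ to each element to be covered and a point of $S$ to each candidate set, and must place them so that $\dist(p_i,s_j)\le\varepsilon$ holds precisely when the corresponding membership holds.

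That placement step is where I expect the genuine difficulty to lie. Encoding an arbitrary incidence relation with Euclidean distances under a single threshold $\varepsilon$ is constrained by the triangle inequality: if two centers must both $\varepsilon$-cover a common element point, they are forced within $2\varepsilon$ of one another, so the incidence pattern cannot be realized freely in low dimension. The plan to get around this is either to start from an already-geometric hard problem (the unit-disk-cover route, which sidesteps realizability entirely) or to build explicit gadgets that cluster the centers belonging to each set and space the clusters apart, possibly treating the dimension $d$ as a free parameter to obtain enough room to realize the required distances. I would settle the gadget design only after fixing whichever source problem keeps the embedding simplest.
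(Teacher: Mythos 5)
Your proposal is correct, but it takes a genuinely different route from the one behind this theorem. Note first that the paper you are matched against does not prove Theorem \ref{thm:nsmsk} at all: it imports the result from \cite{Wylie:2014:TCS}, where (as in the UMM argument of Section \ref{sec:ummk} of this paper, which is built on the same template) the hardness is established by an explicit geometric gadget reduction from planar 3-SAT. Your approach instead collapses the problem outright: since vertices of $Q$ may repeat and only $|S'|$ is bounded, feasibility is equivalent to the existence of at most $K$ points of $S$ whose $\varepsilon$-balls cover the vertices of $P$. Both directions of your equivalence are sound --- the onto property of the coupling gives the cover, and the mirrored curve $Q=\langle s_1,\dots,s_m\rangle$ with the diagonal coupling gives the converse (if one objects to consecutive repeated vertices, merge them and let the coupling stall on $Q$). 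This identifies NSMS with the discrete unit disk cover problem, which simultaneously yields \textbf{NP} membership (a feasible instance always admits a witness curve of length $|P|$, checkable by Theorem \ref{thm:dfdtime}) and \textbf{NP}-hardness by an essentially trivial reduction. What your write-up buys is a much shorter and more transparent proof that also bears on open question (2) of Section \ref{sec:conclusion}, since discrete unit disk cover admits constant-factor approximations and a PTAS; what the gadget approach buys is self-containedness and a construction that adapts to the graph-constrained (map matching) settings, where your collapse fails because $Q$ must be a path in $G$. Two points need tightening: you must actually cite the \textbf{NP}-completeness of discrete unit disk cover rather than assert it (it follows, for instance, from the \textbf{NP}-hardness of minimum dominating set in unit disk graphs, due to Clark, Colbourn, and Johnson, 1990, by taking the candidate centers equal to the points to be covered), and you should delete the hedged final paragraph about reducing from abstract Set Cover or Dominating Set --- as you yourself observe, arbitrary incidence systems are not realizable as single-threshold Euclidean incidences in fixed dimension, so that alternative is not a viable fallback; the unit-disk-cover route \emph{is} the proof.
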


\begin{theorem} \label{thm:usmck}
    The discrete unique \scm\ matching (USM) problem is \npc\ \cite{Wylie:2014:TCS}.
\end{theorem}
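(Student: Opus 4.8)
The plan is to establish \npc{}ness in the two standard parts: first that USM lies in \textbf{NP}, and then that it is \textbf{NP}-hard via a reduction from a suitable classic problem. Membership is immediate. A certificate is simply the ordered curve $Q$, and given $Q$ we can check in linear time that its vertices are distinct points of $S$ and that $|Q| \le K$ (recall that under the uniqueness constraint no point repeats, so $|Q| = |S'|$ and a single bound suffices), and then compute $\dfre(P,Q)$ and compare it to $\varepsilon$ in polynomial time using the algorithm of Theorem \ref{thm:dfdtime}. Hence verification is polynomial and USM $\in$ \textbf{NP}.

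For hardness I would reduce from 3-SAT. The central feature to exploit is the interaction between the uniqueness constraint and the monotone coupling that defines the \dfd : because each point of $S$ appears at most once in $Q$ and is matched by the coupling to a single contiguous block of vertices of $P$, one point can serve at most one run of consecutive vertices of $P$. By spacing the clusters of candidate points so that distinct clusters lie more than $2\varepsilon$ apart, I can guarantee that a point within $\varepsilon$ of one vertex of $P$ is useless for any vertex lying in a different cluster, which localizes every matching decision and lets the gadget logic be reasoned about one cluster at a time.

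The construction lays out $P$ as a long chain that first traverses a variable gadget for each $x_i$ and then a clause gadget for each clause $C_j$. Each variable gadget offers two candidate points in $S$, one encoding $x_i = \true$ and one encoding $x_i = \false$; the geometry forces exactly one of them into $Q$ to cover that gadget's vertices of $P$, and the budget $K$ is set so that no slack remains to include both. Each clause gadget places vertices of $P$ that can be covered within $\varepsilon$ only by a literal point that was already selected as \true, so that a clause is matchable precisely when one of its literals was set true. The uniqueness constraint is what couples the two stages: a literal point consumed to encode a truth value is then unavailable to be double-counted, and the contiguous-block property prevents a single point from illegitimately covering both a variable vertex and a clause vertex. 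I would then fix $\varepsilon$ and $K$ so that a curve $Q$ with $\dfre(P,Q) \le \varepsilon$ and $|Q| \le K$ exists if and only if the formula is satisfiable, and argue both directions.

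The main obstacle I anticipate is controlling the flexibility of the Fr\'echet coupling rather than the gadget logic itself. Because the coupling may match several vertices of $P$ to one point, or one vertex of $P$ to several points, I must choose point placements and inter-cluster gaps carefully enough that no ``cheating'' coupling can cover a block it was not designed for, and so that monotonicity is exactly what enforces a consistent, order-respecting assignment. Verifying rigorously that every valid low-cost curve decodes to a satisfying assignment (the soundness direction) is where the delicate distance bookkeeping will lie; the completeness direction, building $Q$ directly from a satisfying assignment, should follow routinely from the gadget layout.
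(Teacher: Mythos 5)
First, a point of reference: this paper does not actually prove Theorem \ref{thm:usmck}; it is imported verbatim from \cite{Wylie:2014:TCS} as background, so your proposal can only be judged on its own terms (and against the closely analogous construction the paper does carry out, for UMM, in Section \ref{sec:ummk}). Your \textbf{NP}-membership argument is fine. The hardness sketch, however, contains a genuine flaw: your clause logic is stated backwards and is in fact self-contradictory under the uniqueness constraint. You require a clause vertex of $P$ to be coverable only by ``a literal point that was already selected as \true,'' yet in the very next sentence you (correctly) note that a point consumed in the variable phase cannot be used again and that the contiguous-block property of the monotone coupling prevents one point from covering both a variable vertex and a distant clause vertex. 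Both cannot hold: if the literal point was placed into $Q$ to encode the assignment, uniqueness forbids it from reappearing, so it can never also cover the clause vertex. The semantics must be inverted --- the variable gadget must \emph{consume} the points of the falsified polarity, leaving the points of \emph{true} literals free to cover clause vertices --- and both directions of your iff have to be rebuilt around that inversion; it is not a cosmetic relabeling.

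Even after inverting, there is a deeper missing idea: multiple occurrences of a literal. With one point per literal, a free (true) literal point can appear in $Q$ at most once, hence can cover the vertex of at most one clause gadget. A satisfiable formula in which a single literal is the only satisfier of several clauses then admits no valid $Q$, so the completeness direction of your reduction fails outright. Repairing this requires one point (or point cluster) per literal \emph{occurrence}, together with a mechanism that forces all occurrences of a variable to be set consistently under the uniqueness and monotonicity constraints --- truth values must be \emph{propagated}, not merely declared once. That propagation machinery (chains carrying a \true/\false\ setting, a variable gadget forcing alternation between the two polarities, clause gadgets that admit at most two \false\ incoming paths) is precisely the technical heart of the paper's own UMM reduction (Theorem \ref{thm:ummck}), and some one-dimensional analogue of it is what your sketch would need. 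Your closing remark that only ``delicate distance bookkeeping'' remains understates the problem: what is missing is not bookkeeping but a gadget, and the budget $K$ accounting (one point per variable plus one per clause) would also have to be redone once occurrence points and consistency structures are introduced.
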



\section{Discrete Map Matching} \label{sec:dismmatch}

The definition of discrete map matching is similar to the set-chain matching definition 
in \cite{Wylie:2014:TCS}, and has three variants that we will consider.  

\begin{definition}[The Discrete Map Matching Problem]\hfill \\
    \noindent
    {\bf Instance:}
    Given a simple connected planar graph $G=(V,E)$ embedded in $\mathbb{R}^2$, 
    a polygonal curve $P$ in $\mathbb{R}^d$ $(d \geq 2)$, an integer $K \in \mathbb{Z}^+$,  
    and an $\varepsilon > 0$. \\
    \noindent
    {\bf Problem:}
    Does there exist a path $Q$ in $G$ with the polygonal curve using
    vertices chosen from $V'$ where $V' \subseteq V$, 
    such that $T \leq K$ and $\dfre(P,Q) \leq \varepsilon$?
\end{definition}

$T$ is defined in two ways.  When we are minimizing the size of the chain, $T=|Q|$.
If we are minimizing the vertices in the graph used then $T=|V'|$.   
We look at the analogous
versions of the set-chain matching problems for each of these: NMMC and NMMS.
We then consider the version where the vertices in the path must be unique and only
used once, which we label UMM.  
Note that when the vertices are unique the two minimization problems ($|Q|$,$|V'|$)
are equivalent.
For reference, the naming convention is Unique/Non-unique(U/N) 
Map(M) Matching(M) with a $K$ Subset/Chain(S/C).  





\section{Map Matching with $T=|Q|$ (NMMC)}\label{sec:nmmck}


When focused solely on the length of $Q$, the problem is similar to the set-chain
version (NSMC) \cite{Wylie:2014:TCS}.  The problem has an almost identical 
optimal substructure, so we forgo the proof here.  The recurrence to find the minimum
size of $Q$ (in number of vertices) is given in Equation \ref{eq:nmmck}. The
actual dynamic programming algorithm is also omitted, but is straightforward.
The recurrence uses a 2D array $M$ of size $|V| \times |P|$ where the first row and column are
initialized to one if $\dist(v_1,p_k) \leq \varepsilon$ where $1 \leq k \leq |P|$ 
and $\dist(v_k,p_1) \leq \varepsilon$ where $1 \leq k \leq |V|$. The values are set to $K+1$ otherwise.
This is polynomial with the worst case being
in a complete graph, which is equivalent to the discrete set-chain matching version.  
If the graph is not complete, the complexity will be lower since each vertex $v$ only looks at 
its neighbor set, $N(v)$.


\begin{equation}
    M[i,j] = \min
    \begin{cases}
        M[i,j\mbox{-}1], & \hspace*{-.2cm} \mbox{if } \dist(v_i,p_j) \leq \varepsilon, v_i \in N(v_{i\mbox{-}1}), 
                 M[i,j\mbox{-}1] \neq K\mbox{+}1 \\
        \underset{k \in N(v_i)}{\min} \hspace*{-.1cm} M[k,j\mbox{-}1] + 1, & \hspace*{-.2cm} \mbox{if } \dist(v_i,p_j) \leq \varepsilon, v_i \in N(v_{i\mbox{-}1}), 
                 M[i,j\mbox{-}1] = K\mbox{+}1 \\
        K+1, & \hspace*{-.2cm} \mbox{if } \dist(v_i,p_j) > \varepsilon \mbox{ or } v_i \notin N(v_{i\mbox{-}1})
    \end{cases}
    \label{eq:nmmck}
\end{equation}


\begin{theorem}
    The discrete Non-unique Map Matching (NMMC) problem where $T=|Q|$ is in \textbf{P}.
\end{theorem}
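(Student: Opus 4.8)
The plan is to give a dynamic program that directly encodes the coupling underlying $\dfre$, augmented with the graph adjacency constraint, and then to argue that it runs in polynomial time. Because the non-unique variant permits vertices of $G$ to be reused, the object we seek is really a \emph{walk} in $G$ rather than a simple path, and it is precisely this relaxation that makes a clean polynomial algorithm possible. I would define a table entry $M[v,j]$ as the minimum number of vertices in a walk $Q$ in $G$ that ends at $v$ and admits a monotone coupling of $\langle p_1,\dots,p_j\rangle$ against $Q$ of width at most $\varepsilon$, with $p_j$ coupled to the terminal vertex $v$; an entry is set to $K\!+\!1$ (or $\infty$) whenever $\dist(v,p_j) > \varepsilon$, since such a pairing is infeasible. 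The decision answer is then \true\ precisely when $\min_{v \in V} M[v,|P|] \le K$.

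The transitions mirror the three moves available to a discrete \frechet\ coupling, restricted so that every advance of $Q$ follows an edge of $G$. A \emph{horizontal} move advances $P$ while the walk stays at $v$, contributing $M[v,j\!-\!1]$ with no new vertex; a \emph{diagonal} move advances both, contributing $\min_{u\in N(v)} M[u,j\!-\!1]+1$; and a \emph{vertical} move advances the walk from a neighbor $u\in N(v)$ to $v$ while $P$ remains at $p_j$, contributing $\min_{u\in N(v)} M[u,j]+1$. The base case $M[v,1]=1$ for every $v$ with $\dist(v,p_1)\le\varepsilon$ records the singleton walk $\langle v\rangle$ coupled to $p_1$. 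Correctness would follow from the standard substructure argument for $\dfre$: an optimal coupling of the prefix $\langle p_1,\dots,p_j\rangle$ ends with exactly one of these three moves; the horizontal and diagonal cases reduce to column $j\!-\!1$, while the vertical case stays in column $j$ but with a strictly smaller vertex count, so the values are well defined and are computed exactly by an in-column shortest-path relaxation.

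The step I expect to be the main obstacle is the \emph{vertical} move, because it creates dependencies \emph{within} a single column $j$: to reposition the walk from $u$ to $v$ while $P$ is parked at $p_j$ one may traverse an arbitrarily long path, all of whose vertices must lie within $\varepsilon$ of $p_j$. Consequently a naive left-to-right, top-to-bottom fill does not suffice, and this is exactly the feature that distinguishes NMMC from the set-chain problem NSMC of Theorem~\ref{thm:nsmck}, where the implicit complete graph lets any vertex be reached by a single diagonal step and no in-column propagation is ever needed. My resolution is to process the columns in order and, within each column $j$, to treat the incoming horizontal and diagonal values as source labels and then run a shortest-path relaxation over the subgraph induced by $\{v : \dist(v,p_j)\le\varepsilon\}$, where each traversed edge adds one vertex. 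Equivalently, one can build a layered product graph on the feasible pairs $(v,j)$ with zero-cost horizontal edges, unit-cost diagonal edges $(u,j\!-\!1)\to(v,j)$, and unit-cost vertical edges $(u,j)\to(v,j)$ for $uv\in E$, attach a super-source and super-sink, and return a single-source shortest path; the non-negative $0/1$ weights make Dijkstra (or a bucketed BFS) applicable.

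Finally I would bound the running time. The table has $\BO(|V|\,|P|)$ entries, and the only nontrivial work is the per-column shortest-path computation, which costs $\BO(|E| + |V|\log|V|)$ with Dijkstra, or $\BO(|V|+|E|)$ with a $0/1$-BFS. Summing over the $|P|$ columns yields a polynomial total of $\BO(|P|(|V|\log|V| + |E|))$; planarity of $G$ forces $|E| = \BO(|V|)$, so the bound simplifies to $\BO(|P|\,|V|\log|V|)$, establishing NMMC $\in \textbf{P}$. I would close by remarking that planarity and the straight-line embedding are not actually required for membership in \textbf{P} and only serve to tighten the edge count in the final bound.
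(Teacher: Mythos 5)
Your proof is correct, and it actually does more than the paper's own argument. The paper treats NMMC as a direct adaptation of the NSMC dynamic program (Theorem \ref{thm:nsmck}): its recurrence (Equation \ref{eq:nmmck}) uses only horizontal moves (advance $P$, stay at the same vertex) and diagonal moves (advance $P$ and step to a neighbor), with the sole change that the predecessor must lie in $N(v)$. Your key observation --- that in a non-complete graph one must also permit \emph{vertical} moves, in which the walk traverses intermediate vertices, all within $\varepsilon$ of a single node $p_j$, while $P$ stays put --- is genuinely absent from the paper, and it matters. Take $P=\langle p_1,p_2,p_3\rangle$ and the path graph $a$--$b$--$c$--$d$ with $a$ within $\varepsilon$ of $p_1$ only, $b$ and $c$ of $p_2$ only, and $d$ of $p_3$ only: the walk $\langle a,b,c,d\rangle$ with coupling $(a,p_1),(b,p_2),(c,p_2),(d,p_3)$ is the unique feasible solution, has $|Q|=4>|P|$, and is invisible to any recurrence with only horizontal and diagonal transitions, which would wrongly report infeasibility. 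Your in-column shortest-path relaxation (equivalently, the $0/1$-weighted layered product graph with a super-source) is exactly what is needed to handle the resulting within-column dependencies, and your correctness argument, the $\BO(|P|(|E|+|V|\log|V|))$ bound, and the remark that vertical moves can always be eliminated in the complete-graph (set-chain) case are all sound. In short, both arguments establish membership in \textbf{P} via a $|V|\times|P|$ dynamic program, but yours is the one that correctly accounts for walks longer than $|P|$, a case the paper's recurrence overlooks.
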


\begin{proof}
    This problem is similar to NSMC (\cite{Wylie:2014:TCS}) with the restriction of 
    which vertices are viable given the previous choice.  
    Rather than looking at all values for the last
    vertex, it is restricted to only those vertices which have an edge
    between them.  
    Thus, the problem has a similar solution and is polynomial.
    \hfill $\square$
\end{proof}


\section{Map Matching with $T=|V'|$ (NMMS)} \label{sec:nmmsk}

This problem has proven more difficult to analyze than the other versions
of the discrete map matching problem.  
With the \dfd\ we show that it is \npc\ for general graphs (Theorem \ref{thm:nmmskgen}), 
and we show that with planar graphs the problem is \npc\ under the
Hausdorff distance (Theorem \ref{thm:nmmskhaus}). Even though we show these results,
we did not prove the complexity for the planar version under the \dfd . 
We believe this problem to also be \npc, however, we leave this problem for future work.

\begin{corollary} \label{thm:nmmskgen}
    Discrete non-unique map matching where $T=|V'|$ in general graphs is \npc .
\end{corollary}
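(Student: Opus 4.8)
The plan is to derive this as an immediate consequence of Theorem \ref{thm:nsmsk}, which already establishes that the discrete non-unique set-chain matching problem NSMS (the case $T=|S'|$) is \npc. The only structural difference between NSMS and NMMS is that the curve $Q$ in NMMS must trace a walk in the graph $G$ (vertices may repeat, since we are in the non-unique setting), whereas in NSMS the vertices of $Q$ may be chosen as an arbitrary sequence of points from $S'$. I would close this gap by reducing from NSMS to the special case of NMMS in which $G$ is a complete graph, where the adjacency constraint on the walk becomes vacuous.

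For membership in \textbf{NP}, a certificate is a walk $Q$ in $G$. One checks in polynomial time that consecutive vertices of $Q$ are adjacent in $G$, that the number of distinct vertices $|V'|$ satisfies $|V'|\leq K$, and that $\dfre(P,Q)\leq\varepsilon$; the final test runs in polynomial time by Theorem \ref{thm:dfdtime}. As in NSMS, it suffices to restrict attention to walks whose length is polynomially bounded in $|P|$ and $|V|$ --- intuitively, within the configuration space of pairs (vertex of $P$, vertex of $V'$) a shortest feasible monotone coupling never needs to revisit a configuration --- so the certificate has polynomial size.

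For \textbf{NP}-hardness I would reduce from NSMS. Given an NSMS instance $(S,P,K,\varepsilon)$, I construct the NMMS instance $(G,P,K,\varepsilon)$ where $G=(V,E)$ is the complete graph on $V=S$. This construction is clearly polynomial. Since $G$ is complete, every pair of vertices is joined by an edge, so any finite sequence of vertices drawn from a subset $V'\subseteq V$ is automatically a valid walk in $G$; thus the feasible curves $Q$ for the NMMS instance are exactly the feasible curves for the NSMS instance, and the minimized quantity $T=|V'|$ equals $T=|S'|$. The two instances therefore share the same yes/no answer, and NMMS on general graphs is \npc.

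The reduction itself is essentially immediate, which is why the statement is phrased as a corollary; the point that requires care is conceptual rather than technical. The argument hinges on $G$ being complete, and $K_n$ is non-planar for $n\geq 5$, so the construction yields only a \emph{general} graph and says nothing about planar inputs. This is precisely the reason the planar version under the \dfd\ cannot be settled this way and is left open: a genuinely planar, gadget-based reduction is needed, as is carried out for the Hausdorff distance in Theorem \ref{thm:nmmskhaus}.
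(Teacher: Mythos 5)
Your proposal is correct and follows essentially the same route as the paper: a reduction from NSMS by building the complete graph on $V=S$, so that the adjacency constraint is vacuous and the two instances have identical feasible curves and identical values of $T$. Your added remarks on \textbf{NP} membership (polynomially bounded certificate walks) and on why completeness of $G$ blocks any conclusion for planar inputs are sound elaborations of points the paper leaves implicit.
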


\begin{proof}
    By a simple reduction from NSMS (Theorem \ref{thm:nsmsk}) we show this is true.
    Given a set of points $S$, a polygonal curve $P$, $\varepsilon > 0$, and a $K \in \mathbb{Z}^+$, 
    we build a complete graph $G=\{V,E\}$ with
    the vertices being the points in $S$, i.e. $V=S$ and the number of edges $|E| = {|S| \choose 2}$. 
    
    This embedding allows all possible paths to be
    explored for $Q$, and the path it returns will be the same for both NMMS and NSMS.
    Thus, there exists a polygonal curve $Q$ with nodes taken from $S' \subset S$, such that
    $|S'| \leq K$ and $\dfre(P,Q) \leq \varepsilon$ if and only if there exists a path $Q'$
    in $G$ such that the vertices in $Q'$ are taken from $V' \subset V$, such that
    $|V'| \leq K$ and $\dfre(P,Q') \leq \varepsilon$.
    \hfill $\square$
\end{proof}

\begin{theorem} \label{thm:nmmskhaus}
    Discrete non-unique map matching with $T=|V'|$ in planar graphs under the Hausdorff distance is \npc .
\end{theorem}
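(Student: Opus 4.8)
The plan is to establish membership in NP and then NP-hardness by reducing from a planar NP-complete problem. For membership, I would take the walk $Q$ (equivalently, the connected vertex set $V'$ it visits) as the certificate. Given $V'$, one checks in polynomial time that $G[V']$ is connected, so that a walk realizing exactly $V'$ exists, that $|V'| \le K$, and that $\dhaus(P,Q) \le \varepsilon$; since the Hausdorff distance between the point sets of $P$ and $Q$ is computable in polynomial time, the verification is polynomial and the problem lies in NP.

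For hardness, the key observation is that under the Hausdorff distance only \emph{coverage} matters, not the order in which $P$ is traversed: $\dhaus(P,Q) \le \varepsilon$ holds exactly when every point of $P$ lies within $\varepsilon$ of some vertex of $Q$ and every vertex of $Q$ lies within $\varepsilon$ of $P$. Minimizing $T = |V'|$ therefore reduces to finding a minimum-size \emph{connected} (in $G$) set of vertices that covers $P$ with radius-$\varepsilon$ balls. I would reduce from Planar 3-SAT: given a planar formula $\varphi$, I would lay $G$ and $P$ out along a planar embedding of the variable--clause incidence graph, so that no edges cross and $G$ stays a simple connected planar graph.

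For the gadgets, each variable $x_i$ gets a gadget offering two connected ``arms'' of vertices, one encoding $x_i$ true and one encoding $x_i$ false; a short stretch of $P$ routed through the gadget can be covered within $\varepsilon$ only by selecting one of the two arms, and $K$ is tuned so that exactly one arm per variable is affordable. Each clause $c_j$ gets a gadget with a critical arc of $P$ coverable only by a vertex belonging to one of the three literal arms feeding into $c_j$, so the clause is covered without extra cost precisely when at least one of its literals was set true. A connected backbone threaded through all gadgets (and counted in the budget) forces the selected vertices to form one connected subgraph, realizable by an actual walk $Q$. With $K$ equal to the backbone size plus one arm per variable, I would argue that $\varphi$ is satisfiable if and only if $P$ admits a covering connected vertex set of size $\le K$.

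The main obstacle is making three conditions hold at once: planarity of $G$, which the planar incidence structure is meant to supply but which the arms, backbone, and clause connectors must respect when routed; connectivity of $V'$, handled by the forced backbone, which must be cheap enough not to blur the budget accounting; and the two-sided Hausdorff condition, where the ``every vertex near $P$'' direction must prevent unintended vertices from entering cheap covers while the ``every point of $P$ covered'' direction must force exactly the intended arm selections. Calibrating $\varepsilon$, the vertex coordinates, and $K$ so that geometric coverage mirrors logical satisfaction while keeping the construction planar is the delicate part; notably, the ordering freedom that makes the \frechet\ planar case resist this argument is precisely what the Hausdorff relaxation removes.
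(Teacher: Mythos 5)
Your proposal has the right key insight---under the node-based Hausdorff distance only \emph{coverage} matters, not traversal order, so the problem becomes a connected covering problem---but as written it is a plan, not a proof. Every load-bearing step of the hardness reduction is asserted rather than constructed: you never specify the variable arms, the clause arcs, the backbone, the coordinates, or $\varepsilon$, and the claims that do the work (``$K$ is tuned so that exactly one arm per variable is affordable,'' ``a critical arc of $P$ coverable only by a vertex belonging to one of the three literal arms'') are exactly the delicate parts you defer to the last paragraph. The gap is not cosmetic. Because this is the \emph{non-unique} variant, a walk may freely revisit vertices, so soundness of your budget argument requires an exactly tight $K$ with zero slack: otherwise a cover could stay connected through the clause gadget and grab one stray vertex of an unselected arm to cover the clause arc. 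That in turn forces the two arms of each variable to have equal cost even though each arm must physically reach every clause containing its literal, which depends on the planar layout---so the counting, the geometry, and the planarity constraints interact, and none of that accounting is carried out. A reduction whose gadgets are unspecified cannot be verified, so the proof is incomplete.

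For comparison, the paper's own proof avoids gadgets entirely: it reduces from the Hamiltonian circuit problem in grid graphs (Itai et al.). Take $G$ to be the grid graph itself, let $P$ be its vertex set ordered arbitrarily as a polygonal chain, and require $\dhaus(P,Q)=0$; since Hausdorff distance ignores order, this forces $Q$ to visit every vertex of $G$, and a tight cardinality bound forces each vertex to be visited exactly once, i.e., $Q$ is Hamiltonian. This exploits precisely the order-freeness you identified, but with no calibration of $\varepsilon$, budgets, or embeddings. One caveat worth knowing: the paper's tightness argument really pins the number of nodes of the walk, $|Q|$, rather than the number of distinct vertices $|V'|$---with a bound only on $|V'|$, a spanning walk that revisits vertices also achieves Hausdorff distance $0$. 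Your connected-cover formulation targets the right quantity $|V'|$ for this theorem, which makes it all the more important that the gadgets and the exact budget be exhibited rather than assumed.
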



\begin{proof}
    This is a straight-forward reduction from the Hamiltonian circuit problem in grid graphs \cite{Itai:1982:SIAM}.
    Let $G_{k}$ be a grid graph with $k$ vertices such that any vertex $v \in G_k$ is located at 
    $v_x,v_y \in \mathbb{Z}^+$ and $G_{k}$ is an induced 
    subgraph of the infinite unit grid graph $G_{\infty}$.
    For our construction we let $G = G_{k}$ and we let $P$ be the vertices $V$ of $G$ ordered in any way as
    a polygonal chain.
    
    There exists a Hamiltonian circuit in $G_{k}$ if and only if 
    there exists a path $Q$ in $G$ such that $|Q|=|V|$ and $\dhaus(P,Q) = 0$.

    If there is a Hamiltonian circuit $Q'$ on the graph, then by definition, this path 
    (with the start/end vertex arbitrarily chosen) has $\dhaus(G,Q') = 0$
    since it covers every edge and vertex.  Also, since $|Q'|=|V|$, $|Q|=|V|$. 
    
    Given a path $Q$ in $G=G_{k}$ where $|Q|=|V|$ and $\dhaus(P,Q) = 0$. Note that
    this path must visit every edge and vertex since $\dhaus(P,Q) = 0$.  If $|Q|=|V|$,
    it must only visit each vertex once, and thus the path must be a Hamiltonian
    circuit.
    \hfill $\square$
\end{proof}



\section{Unique Map Matching (UMM)} \label{sec:ummk}


Map matching with unique vertices is an interesting and relevant problem.
In most applications related to map matching where the graph is 
planar, rarely would a vertex be visited multiple times.  In the 
GPS application of a vehicle on a road network, this would be equivalent to a car 
visiting the same intersection multiple times.  This may occur, but when trying to 
find the likely path of a vehicle from the origin to the destination we can
disregard self-intersecting data as unimportant overall.

We address discrete unique map matching where any vertex in the graph can be used
at most once in the path, and show that this problem is \npc\ 
via a reduction from planar 3-SAT \cite{Lichtenstein:1982:SIAM}. Planar 3-SAT is
a 3-SAT formula that can be drawn as a planar graph with vertices representing clauses
and variables, and the edges representing inclusion of a variable in a clause.  
This is a convenient form of 3-SAT for geometric reductions since 
a crossover gadget is unnecessary.

By standard convention, we first introduce several
planar ``gadgets'' that we then arrange in our reduction.  We will build up the gadgets in
a piecewise manner, and then show how they are connected to form a single polygonal curve and a planar graph.
Due to the length of this section, we cover the gadgets and then formally do the
reduction with the assumption of their correctness.
  
Let $\varphi$ be the 3-SAT formula represented by the input instance of planar 3-SAT with $N$ variables 
and $M$ clauses. 
Given an $\varepsilon > 0$, we construct a planar graph $G$ and a polygonal curve $P$.
We show that $\varphi$ is satisfiable if and only if with our construction there exists a path 
$Q$ with unique nodes from the vertices in $G$ such that $\dfre(P,Q) \leq \varepsilon$.


\subsection{The Chain Gadget}
In order to retain a \true\ or \false\ selection, we first show a `chain' gadget used to transfer information
from the variables to the clauses. 
Figures \ref{fig:umm_chain_true} and \ref{fig:umm_chain_false} show a polygonal curve with a ladder graph structure
that constitute a chain with \true\ and \false\ paths shown, respectively.  
Since the nodes used in $G$ may only be used once, a path through this graph that maintains
$\dfre(P,Q) \leq \varepsilon$ only has two possibilities.  Starting at the top
left vertex a path going down is a \true\ setting and going to the right is \false.

\begin{figure}[ht!]
    \begin{center}
        \subfigure[True Chain]{\label{fig:umm_chain_true}\includegraphics[width=.37\textwidth]{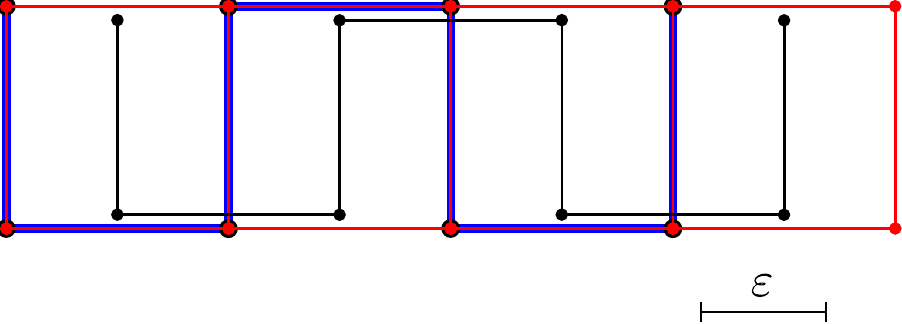}}
        \hspace*{.2cm}
        \subfigure[False Chain]{\label{fig:umm_chain_false}\includegraphics[width=.37\textwidth]{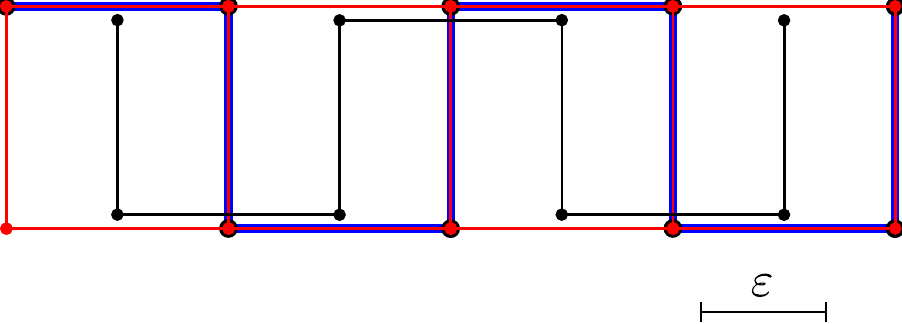}}
        \hspace*{.2cm}
        \subfigure[Elbow]{\label{fig:umm_elbow}\includegraphics[width=.18\textwidth]{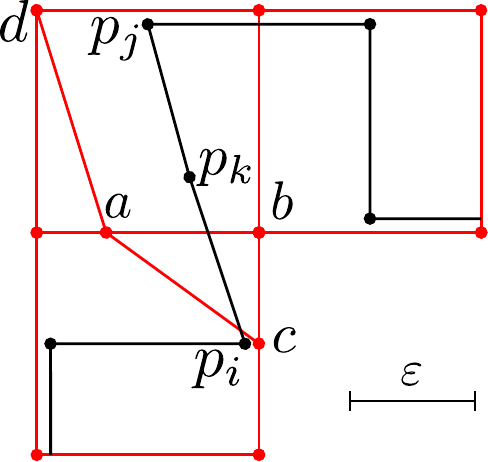}} \\
    \end{center}
    \caption{(a) A chain with a \true\ path. (b) A chain with a \false\ path. (c) An elbow in order for the chain to make turns.}
    \label{fig:umm_temp}
\end{figure}

Figure \ref{fig:umm_elbow} shows how the chain can make a right angled turn 
while maintaining the \true\ or \false\ path.  This allows the chains to be configured in a natural way with a 
clean layout. Note the extra edges of the graph ($e_{ac}$ and $e_{ad}$) in the corner which allows
the \true\ path to go around the vertex $b$. Since a \false\ path must pass through
$a$ and $b$ to cover $p_i$ and then $p_{k}$, it must continue up from $b$ in order to cover $p_j$. 
The \true\ path goes through $c$ and has already covered $p_i$ so it can then go 
through $a$ via $e_{ac}$ and cover $p_k$, and then via $e_{ad}$ it goes around the
outside of the corner to cover $p_j$ and then comes down to $b$. 



\subsection{The Variable Gadget}
A chain is the basis of the variable gadget, but we also attach an additional 
diamond structure along with another polygonal curve for the diamond graph components.
Figure \ref{fig:umm_var} shows the planar graph and the two pieces of $P$ 
necessary for each variable gadget.   
The additional diamond shapes are needed in order to force the
alternation between \true\ and \false\ states.  

Setting a variable \true\ or \false\ works identically to the chain gadgets. 
For a variable $x_i$, to set the variable \true, the path begins at $c_{i_1}$ and 
visits vertex $d_{i_1}$ next (Figure \ref{fig:umm_var_true}), and to set $x_i$
\false, the path begins at $c_{i_1}$ and goes through vertex $v_{i_1}$ and
$c_{i_2}$.  The chains will connect onto the $a_i$ nodes with odd subscripts 
being $x_i$ ($a_{i_1},a_{i_3},\dots$), and even subscripts representing connections
for $\lnot x_i$ ($a_{i_2}, a_{i_4},\dots$).  The $a_i$ vertices connect by being
shared on one edge of a chain as shown in the example of Figure \ref{fig:umm_example}.
Thus, a \true\ path in the variable does not use the $a_i$ vertex where the chain attaches
and the chain can use that edge (for a \true\ path), but
a \false\ variable path does require the vertex $a_i$ which means the chain will not be able to
use the edge that attaches and will also have a \false\ path setting.
We also note that the nodes $p'_{i_3}$, $p'_{i_6}$, $p'_{i_9}$, etc. are within $\varepsilon$ of 
only the $a_i$ and $v_i$ vertices.

%

\begin{figure}[ht!]
    \begin{center}
        \subfigure[True Variable Gadget]{\label{fig:umm_var_true}\includegraphics[width=.43\textwidth]{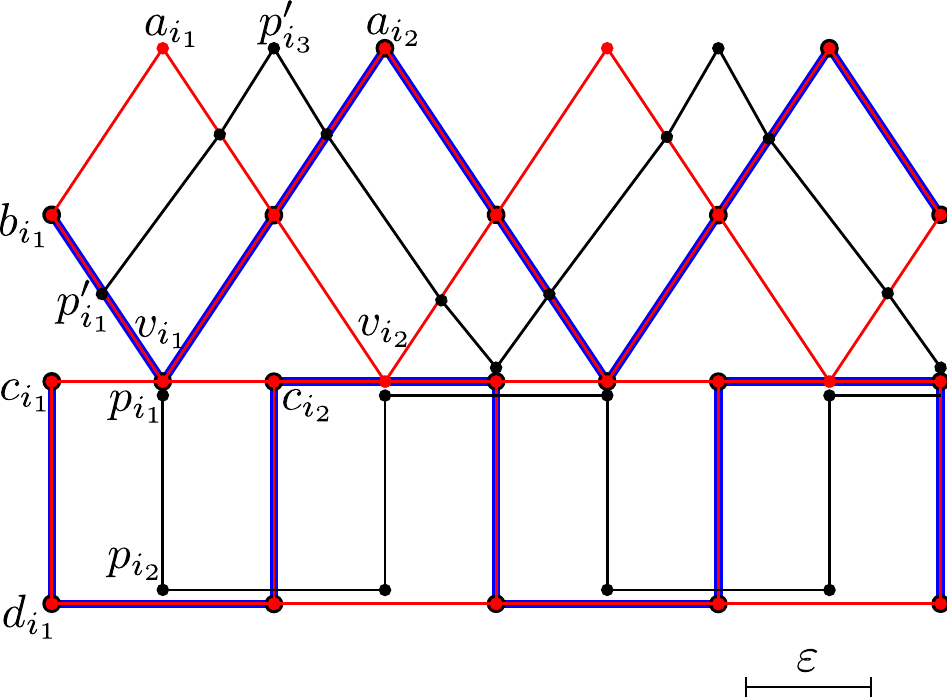}}
        \hspace*{.5cm}
        \subfigure[False Variable Gadget]{\label{fig:umm_var_false}\includegraphics[width=.43\textwidth]{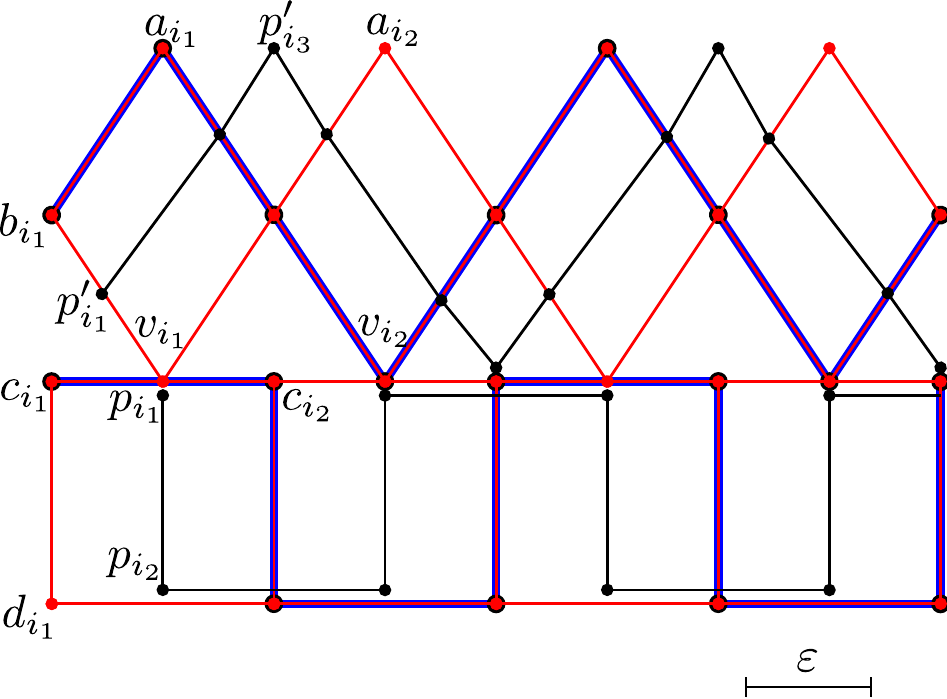}} \\
    \end{center}
    \caption{Variable gadgets with path settings for (a) \true\ (b) \false. A path begins at $c_{i_1}$ and 
        passes through $d_{i_1}$ for \true\ and through $v_{i_1}$ and $c_{i_2}$ for \false.}
    \label{fig:umm_var}
\end{figure}


\subsection{The Clause Gadget}
For the clause gadget, we assume that three `chains' are attached to variable gadgets
and then meet at the junction shown in Figure \ref{fig:umm_clause}. The clause gadget is
planar, but our path along the edges of the graph will no longer be a polygonal curve. 
The \dfd\ is valid since the distances are based on the nodes and not the edges
of the curve, but this would drastically alter the continuous \frechet\ distance.

The vertices in the center are important since many of the edges curve (or turn) in the
space without a vertex.  This is necessary for our reduction, and thus is true
for distance in the space, but does not hold for the \dfd\ based on network
distance \cite{Fan:2011:CGGA}.
We can attach chains at one vertex to the top vertex in the variable gadgets.
These chains then lead to the clause gadget.  See Figure \ref{fig:umm_example}
for an example.

\begin{figure}[!ht]
    \centering
    \includegraphics[width=.6\textwidth,keepaspectratio]{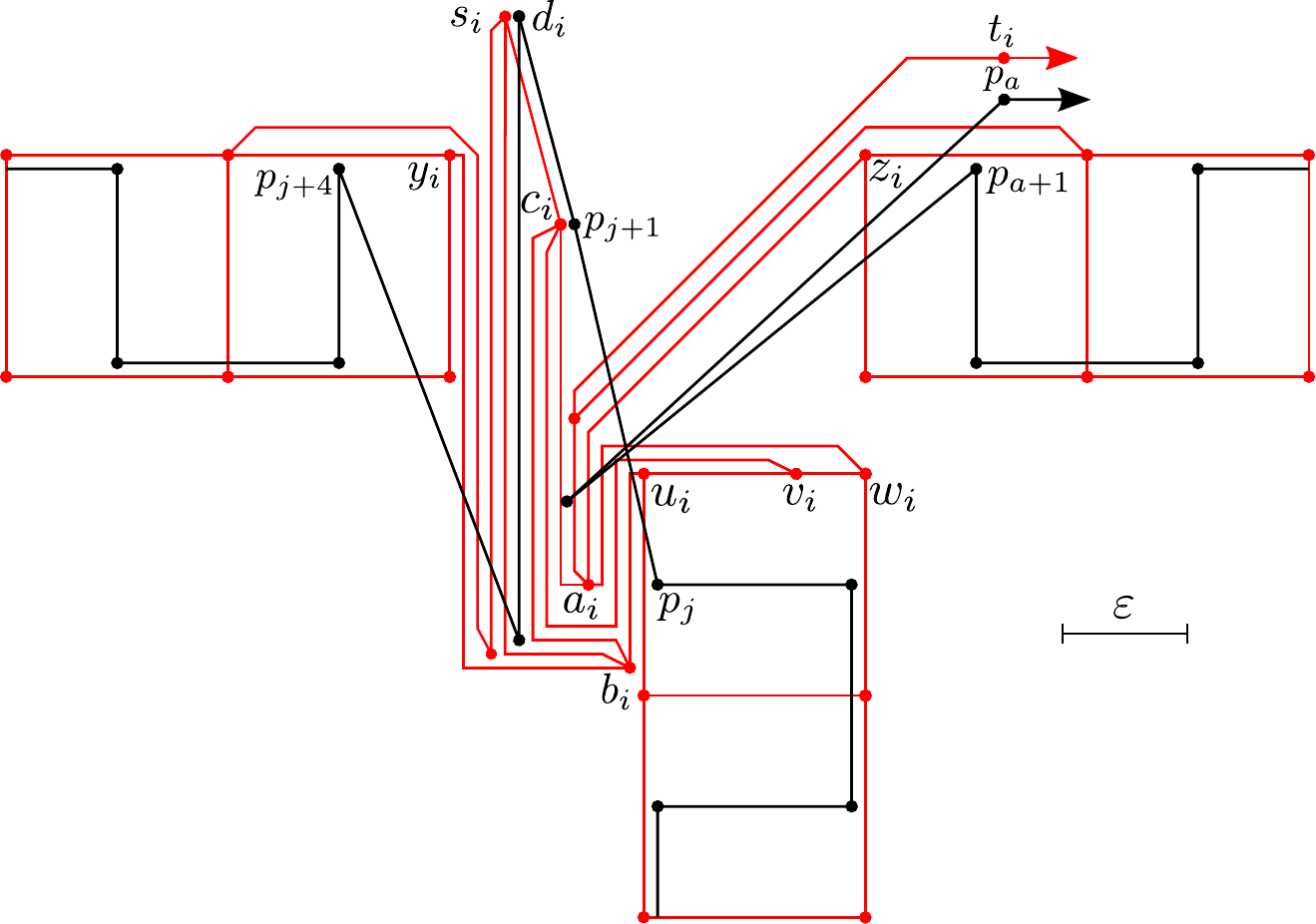}
    \caption{A clause gadget with three chains meeting.}
    \label{fig:umm_clause}
\end{figure}

At a high level, if a path is \true\ in the chain it will not need to use
the last edge.  We will refer to the three variables in our clause as $x_l$, $x_c$, and 
$x_r$ for left, center, and right, respectively.  For $x_l$, if the path is \true, then
it can use the edge to $s_i$.  If the path is \false\ it will end at $y_i$ and there is
only one possible edge.  Similarly, if $x_r$ is \true, it uses the edge to go to
$t_i$, but a \false\ path ends at $z_i$.
In understanding the clause gadget of clause $C_i$, we need focus on the two vertices
$a_i$ and $b_i$. These two act as the Boolean `or' operators which allow only two of the
variables to be \false, and thus requiring at least one to be \true.  Since vertices can
only be used once in a path, each one (of $a_i$ and $b_i$) allow only one edge coming
from a variable to be used.

When $x_l$ is \false, the path ending at $y_i$, it must follow onto $b_i$ and then
go to $s_i$.  Similarly, if $x_r$ has a \false\ path ending at $z_i$, it must follow 
onto $a_i$ and then go to $t_i$.  Notice that $a_i$ and $b_i$ are within $\varepsilon$
of $p_j$.  If $x_c$ has a \false\ path, then it can go from $w_i$ to $a_i$ or go to
$u_i$ and then $b_i$.  This means that if it is \false, it can go to either of the 
center vertices attached to $x_l$ ($b_i$) or $x_r$ ($a_i$).  Thus, only two of them
can be \false.  If $x_c$ is \true, it still needs an edge
to get to $c_i$ without crossing the edges attached to $a_i$ and $b_i$.  If $x_c$
is \true, then a path can go through $u_i$ and then through $v_i$ to $c_i$.
Thus, our clause works as a Boolean `or' of the three variables where at least one
variable must be \true.

\begin{figure}[!ht]
    \centering
    \includegraphics[width=.9\textwidth,keepaspectratio]{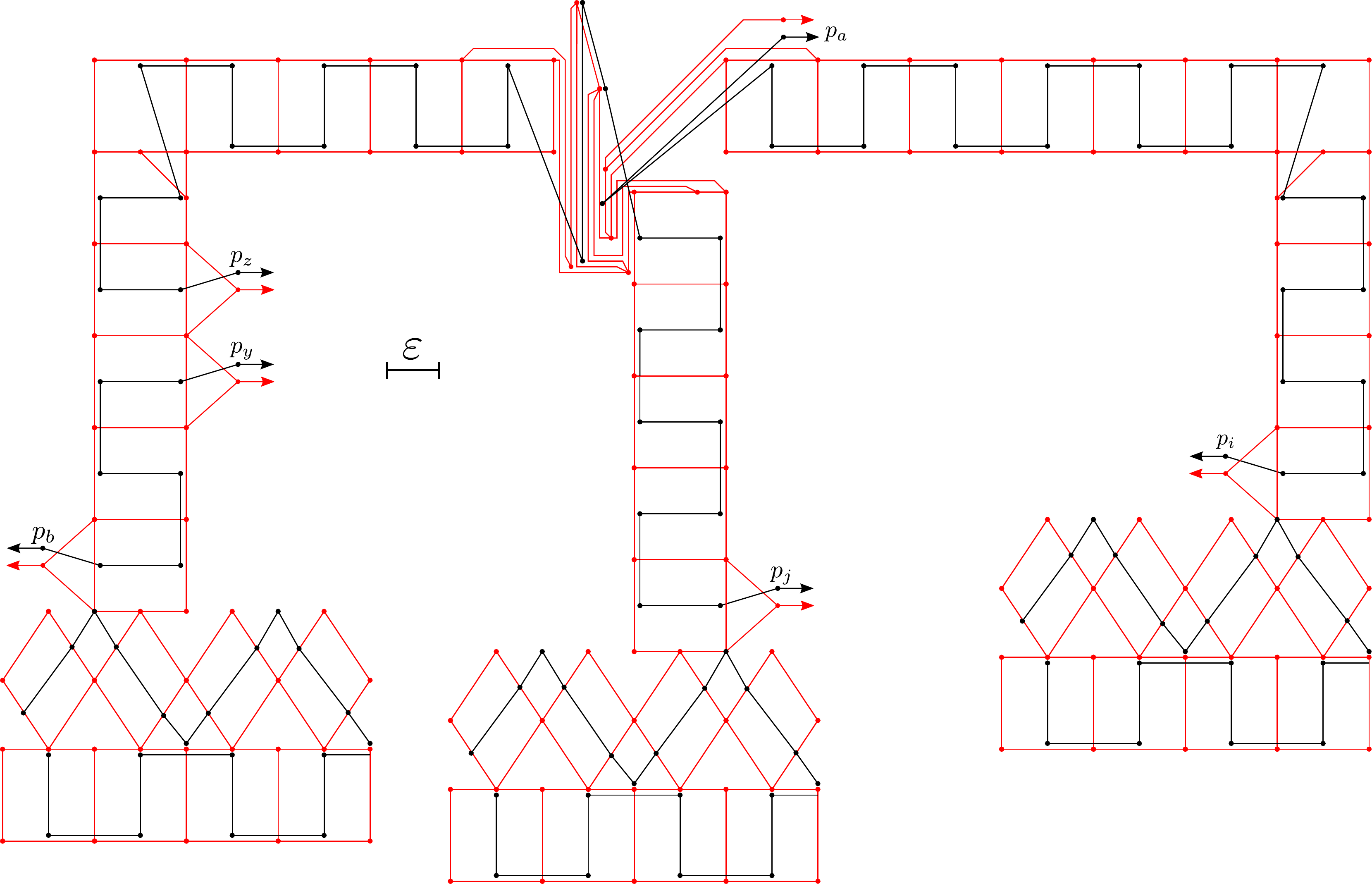}
    \caption{Example UMM clause with three variables $c_i=(\lnot x_1 \lor x_2 \lor \lnot x_3)$ with assignments $x_1=0,$ 
    $x_2=0,$ $x_3=1$.}
    \label{fig:umm_example}
\end{figure}



\subsection{Connecting the Gadgets}

Now we discuss how everything is connected so that $G$ is planar and connected, and that $P$ is
a single continuous polygonal curve. Referring to our clause,
$c_i$ and $s_i$ meet via $d_i$ which allows a path to connect the two variables
without changing their path settings.  Looking at our example (Figure \ref{fig:umm_example})
we see that this leaves six polygonal curve endings (or three segments) in the clause gadget and connecting chains.  
The two exiting nodes ($p_i,p_j$)
from $x_c$ and $x_r$ can simply be connected (or attached to the outside of another nested clause).
Similarly, $p_y$ and $p_z$ can simply be connected if there are no nested clauses under the left leg.
This now means each clause uses a single polygonal curve and has two nodes ($p_a,p_b$) to attach to other 
clauses or the variables along with the associated graph edges.

We can see in the example clause in Figure \ref{fig:umm_example} that the graph and polygonal curve leave
on the outside of the clause.  By design, the middle and right leg connect on the interior.  Thus, in
a planar 3-SAT instance, if there is a nested clause, these lines ($p_i,p_j$) connect to the outside of the nested clause.
Similarly, the points ($p_y,p_z$) connect to the outside of any clause under the left side.  In both cases,
if there is no other clause, the pair of points connect to each other.
If there are multiple clauses under one side, then they are chained together, 
i.e., $p_a$ from one clause connects to $p_b$ from the other.

For some planar 3-SAT instances, it is necessary to attach chains above and below the variable gadget.
We can attach another diamond structure on the other side of our variable gadget and have
clauses on both sides of the variables. The only difference is that it begins with $\lnot x$ in the alternating connections.

We can then define a simple process to connect them.  When saying we are connecting we mean to put
an edge between the two nodes of the polygonal curves and to put an edge between the vertices of the
graph.  We only refer to one of the connections for simplicity.
The order the polygonal chain segments are connected is unimportant as long as the two ends are not within $\varepsilon$ of
each other, and the two graph components can be connected without intersection.

First, connect the variables together. Attach the ending vertex and polygonal node to the 
start of the next variable.  Connect the chain sections and then the diamond structures.
Following, connect the last variable to the closest outside clause variable ($p_a$).  
The clauses are then chained together by outside vertices and the nested clauses link to the inside edges.  
As mentioned, if there are no nested clauses, then connect the pairs ($p_i,p_j$) and ($p_y,p_z$). 


\subsection{The Reduction}

\begin{theorem} \label{thm:ummck}
    The discrete unique map matching (UMM) problem is \npc\ for planar graphs.
\end{theorem}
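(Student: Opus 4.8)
The plan is to establish both membership in \textbf{NP} and \textbf{NP}-hardness. Membership is routine: given a candidate path $Q$ we verify in polynomial time that its vertices are distinct, that consecutive vertices are adjacent in $G$, and that $\dfre(P,Q) \leq \varepsilon$ using the $\BO(mn)$ dynamic program of Eiter and Mannila. For hardness I would reduce from planar 3-SAT, relying on the gadgets already constructed. First I would confirm that the construction is legitimate: each variable gadget, chain, and clause gadget has constant size per literal occurrence, so the total graph $G$ and curve $P$ have size polynomial in $N+M$; the embedding is planar (using the planar incidence structure of $\varphi$, so no crossover gadget is needed); and $P$ is a single continuous polygonal curve. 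All of this follows from the connection procedure of the previous subsection.

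The core of the reduction is the biconditional that $\varphi$ is satisfiable $\iff$ there is a unique-vertex path $Q$ in $G$ with $\dfre(P,Q) \leq \varepsilon$. For the forward direction I would start from a satisfying truth assignment and build $Q$ explicitly. For each variable I select the \true\ or \false\ path through its gadget; the diamond structure guarantees this choice is well-defined and that the alternating $a_i$ connections propagate the same value into every chain leaving the variable. Each chain then carries that setting to its clause, and because every clause has at least one \true\ literal, I can route the path through the clause gadget using the argument already given for the ``or'' vertices $a_i,b_i$: the (at most) two \false\ incoming chains consume the two center vertices, and the surviving \true\ literal supplies the edge needed to reach $c_i$ and exit. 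Stitching these local routes together along the connections yields a single path that hops within $\varepsilon$ of every point of $P$, so $\dfre(P,Q) \leq \varepsilon$.

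For the reverse direction I would argue the structural claim that any valid path is forced into a consistent assignment. The key leverage is the uniqueness constraint: because each vertex may be used at most once, a path that stays within $\varepsilon$ of the curve inside a chain gadget has exactly the two options identified earlier (down for \true, right for \false), with no shortcut available. Reading off these choices defines a truth value for each variable, and the shared $a_i$ vertices together with the diamonds force this value to be identical across all occurrences, so the assignment is consistent. I would then show the clause gadget cannot be fully traversed within $\varepsilon$ if all three literals arrive \false: the three \false\ chains would demand all three of $a_i$, $b_i$, and the route to $c_i$, but with only two ``or'' vertices available one of the center points near $p_j$ could not be covered without reusing a vertex. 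Hence a valid $Q$ forces at least one \true\ literal per clause, yielding a satisfying assignment.

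The hard part will be making the reverse direction airtight — in particular ruling out ``exotic'' paths that exploit the fact that the clause gadget's edges are no longer straight segments, or that traverse gadgets out of the intended order through the connection edges. I would handle this by a careful local analysis showing that the $\varepsilon$-balls around the points $p'_{i_3},p'_{i_6},p'_{i_9},\dots$ and around the clause points (e.g.\ $a_i,b_i$ near $p_j$) are hit only by their intended vertices, so that coverage of $P$ decomposes gadget-by-gadget and the uniqueness constraint cannot be circumvented by rerouting through a neighboring gadget.
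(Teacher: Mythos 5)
Your proposal follows essentially the same route as the paper: a reduction from planar 3-SAT using the chain, variable, and clause gadgets, with the same forward direction (route a path using the satisfied literal's \true\ chain so the ``or'' vertices $a_i,b_i$ suffice for the at most two \false\ chains) and the same reverse direction (uniqueness forces each chain into two settings, consistency propagates back to the variables, and a clause with three \false\ literals cannot be covered), plus the standard \textbf{NP} membership check via the discrete \frechet\ distance algorithm. If anything, you are more explicit than the paper about the verification step and about ruling out ``exotic'' paths via local $\varepsilon$-ball analysis, details the paper subsumes under its assumption of gadget correctness.
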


\begin{proof}
    Given a planar 3-SAT instance $G_{\varphi}=\{V,E\}$ with vertices $V=X \cup C$ such that the vertices represent 
    variables $X=\{x_1,x_2,\dots,x_N\}$ and clauses $C=\{C_1,C_2,...,C_M\}$, and the edges $E=\{e_1,e_2,\dots,e_Z\}$ 
    connect variables to clauses with the degree of each $C_i \in C$ being three.
    Given the planar 3-SAT instance $G_{\varphi}$,  
    we construct a polygonal curve $P$ and a planar graph $H$ using an $\varepsilon > 0$ based on the method described. 
    This construction takes $\BO(|C|+|X|+|E|)$ and is polynomial.
    The sizes of $P$ and $H$ are dependent on $\varepsilon$ and the metric space.  
    In general, for any edge $e_i \in E$ in the space, where $\lVert e_i \rVert$ is the length of the edge, there are 
    $\lceil 2\lVert e_i \rVert/\varepsilon \rceil$ vertices and edges in $H$, and nodes of $P$ used to transfer information
    along that edge.
    
    The planar 3-SAT formula $\varphi$ is satisfiable if and only if there exists a path $Q$ with nodes
    from the vertices in $H$ such that $\dfre(P,Q) \leq \varepsilon$ and each vertex represents a unique node in $Q$.
    
    Given $\varphi$ is satisfiable, then for every clause, there is at least one variable which has a \true\
    value. In our construction this means at least one chain does not need a path through the two points ($a_i,b_i$ for clause $C_i$),
    and thus we can easily find a $Q$ such that $\dfre(P,Q) \leq \varepsilon$.
    
    
    In the other direction, assume there exists a path $Q$ through $V' \subset V$ of $H$ such that $\dfre(P,Q) \leq \varepsilon$.
    There must be at least one \true\ path in a chain at each clause, and since the three chains keep this setting back to
    the variable we know it had this setting at the variable.  Since we also know that the variables
    alternate between a \true\ and \false\ setting, the attached chain has the correct Boolean value associated with the path.  
    Thus, for every variable attached to a clause, it has the correct \true\ or \false\ path setting.  
    Therefore, if $\dfre(P,Q) \leq \varepsilon$, then the current setting of each variable satisfies $\varphi$. 
    
    
    Last, we know the problem is in \textbf{NP}.  Given an instance $I$ we can check whether $\dfre(P,I) \leq \varepsilon$
    in $\BO(|P||I|)$ time via Theorem \ref{thm:dfdtime}.
    \hfill $\square$
\end{proof}

Although our focus is on the \dfd, our reduction actually works to show that the UMM problem is still
\npc\ when based on the continuous \frechet\ distance.  This was simultaneously and independently proven in
\cite{Meulemans:2013:CORR}, but has not been formally published. Since this result is available, we do not rigorously show that
our construction works for the continuous \frechet\ distance here.

\section{Conclusion} \label{sec:conclusion}


In this paper we looked at discrete map matching based on the \dfd\ for the first time, and further defined
some variations based on restricting the problem to unique nodes, 
the number of nodes allowed in the curve, or the number of vertices to choose from.  
We proved that the unique nodes version is \textbf{NP}-complete.
If the number of vertices is restricted, we proved that the problem is \textbf{NP}-complete for regular graphs,
and for planar graphs under the Hausdorff distance.
We proved that finding any path while only limiting the length of the path is polynomial, and  
gave the recurrences for a dynamic programming implementation.
We conclude with a few open questions.

(1) What is the complexity of NMMS based on the discrete and continuous \frechet\ distance?

(2) Are there good approximation algorithms for the optimization versions?

(4) Are there any special cases that are tractable for real-world applications?


\nocite{Wylie:2013:PHD}

\bibliography{map}
\bibliographystyle{abbrv}

\end{document}